\newtheorem{thm}{Theorem}
\title{Quantum Equilibrium Propagation: \\
Gradient-Descent Training of Quantum Systems
}
\author{
Benjamin Scellier\\
Rain AI\\
\texttt{benjamin@rain.ai} \\
}
\begin{document}

\maketitle

\begin{abstract}
Equilibrium propagation (EP) is a training framework for energy-based systems, i.e. systems whose physics minimizes an energy function. EP has been explored in various classical physical systems such as resistor networks, elastic networks, the classical Ising model and coupled phase oscillators. A key advantage of EP is that it achieves gradient descent on a cost function using the physics of the system to extract the weight gradients, making it a candidate for the development of energy-efficient processors for machine learning. We extend EP to quantum systems, where the energy function that is minimized is the mean energy functional (expectation value of the Hamiltonian), whose minimum is the ground state of the Hamiltonian. As examples, we study the settings of the transverse-field Ising model and the quantum harmonic oscillator network -- quantum analogues of the Ising model and elastic network.
\end{abstract}

\section{Introduction}

Machine learning (ML) has been predominantly powered by classical digital computing. Meanwhile, quantum computing and neuromorphic computing are explored as alternative paradigms to enhance ML capabilities. Quantum computing aims to leverage the principles of quantum mechanics, such as superposition, to encode and process information in ways that classical computers cannot, potentially handling exponentially larger amounts of information. In contrast, neuromorphic computing, taking inspiration from the brain's energy efficiency, aims to leverage analog physics and compute-in-memory platforms to significantly reduce the cost of inference and training in ML \citep{markovic2020physics}. An emerging field of research known as `physical learning' \citep{stern2023learning} shares similar goals with neuromorphic computing, but explores the inherent physics of any physical system for computation, without necessarily mimicking neurons and synapses.

A key lesson from ML research over the past decades is the effectiveness of frameworks (such as backpropagation) for optimizing cost functions. One challenge for neuromorphic computing and physical learning has been the search for frameworks that optimize cost functions while adhering to local computation and local learning rules, the latter being essential for implementation on analog compute-in-memory platforms. In recent years, several gradient-descent training frameworks for physical systems have been proposed. For instance, \citet{lopez2023self} introduced a framework applicable to arbitrary time-reversal invariant Hamiltonian systems, and \citet{wanjura2023fully} developed a method for extracting weight gradients in optical systems based on linear wave scattering. This paper focuses on the training framework known as equilibrium propagation.

Equilibrium propagation (EP), introduced in \citet{scellier2017equilibrium}, is a framework for energy-based systems, where physics drives the system's state towards the minimum of an energy function (equilibrium or steady state). EP extracts the gradients of the cost function using two equilibrium states corresponding to different boundary conditions, which are then used to locally adjust the trainable weights of the system. EP has been applied to various systems, including resistor networks \citet{kendall2020training}, elastic and flow networks \citep{stern2021supervised}, spiking networks \citep{martin2021eqspike}, the (classical) Ising model \citep{laydevant2024training}, and coupled phase oscillators \citep{wang2024training}. Recent experimental demonstrations have shown the applicability of EP on hardware: \citet{dillavou2022demonstration,dillavou2023machine} built two generations of self-learning resistor networks, \citet{altman2023experimental} built a self-learning elastic network, \citep{yi2023activity} used a variant of EP in a memristor crossbar array, and \citet{laydevant2024training} used EP on D-wave to train a classical Ising network (where, interestingly, they used quantum annealing to reach the ground state). Simulations have further underscored the potential of EP for ML applications: in particular, \citet{laborieux2022holomorphic} trained an energy-based convolutional network to classify a downsampled version of the ImageNet dataset. More broadly, \citep{zucchet2022beyond} have highlighted EP's general applicability to any bilevel optimization problem (beyond the training of energy-based systems), including meta-learning \citep{zucchet2022contrastive}.

We introduce Quantum Equilibrium Propagation (QEP), an extension of EP to quantum systems. In QEP, the system is brought to the ground state of its Hamiltonian, parameterized by real-valued trainable weights, to produce a prediction. The algorithm performs gradient descent on the expectation value of an observable, which serves as the cost function to optimize. Thus, in QEP, the classical EP's energy function is replaced by the system's Hamiltonian, and the equilibrium state extremizing the energy function is replaced by the ground state of the Hamiltonian. The central ingredient for translating from EP to QEP is the energy expectation value, minimized (more generally, extremized) at the Hamiltonian's ground state (more generally, eigenstates). Similar to EP, an interesting feature of QEP is the locality of the learning rule, which might be useful for the development of specialized quantum hardware with reduced classical overhead, where measurements of the weight gradients and adjustments of the trainable weights would be performed locally.

First we review EP, as well as the (classical) Ising model and elastic network model where EP has been used (Section~\ref{sec:equilibrium-propagation}). Then we present the concepts of quantum mechanics used in the context of QEP (Section~\ref{sec:quantum-mechanics}). Finally, we introduce QEP and study the settings of the transverse-field Ising network and the quantum harmonic oscillator network -- quantum analogues of the Ising model and elastic network (Section~\ref{quantum-equilibrium-propagation}).

We note that, during the writing of this manuscript, \citet{massar2024equilibrium} similarly extended EP to quantum systems via the variational formulation of the Hamiltonian's eigenstates as critical points of the mean energy functional. Additionally, \citet{massar2024equilibrium} studied the thermal case, where a thermodynamic system settles to the minimum of the free energy functional, and showed how to extract the weight gradients from thermal fluctuations alone.

\section{Equilibrium propagation}
\label{sec:equilibrium-propagation}

This section reviews the Equilibrium Propagation (EP) training framework \citep{scellier2017equilibrium}.

EP applies in energy-based systems, which are governed by dynamics that drive their state $s$ towards the minimum of an energy function $\mathcal{E}$. These systems contain trainable weights $w=(w_1,w_2,\ldots,w_M)$ and can take an input $x$ supplied as a boundary condition. We denote the corresponding energy function as $\mathcal{E}(w,x,s)$. During inference, given an input $x$, the system evolves towards its equilibrium or steady state, characterized by
\begin{equation}
\label{eq:equilibrium-state}
s(w,x) = \underset{s}{\arg\min} \; \mathcal{E}(w,x,s).
\end{equation}
The system thus implements a function $x \mapsto s(w,x)$, and training consists in adjusting the weights $w$ so that $s(w,\cdot)$ matches a desired input-output function. Similar to traditional ML paradigms, EP uses a cost function $\mathcal{C}(s(w,x),y)$ which, given an input $x$ and its associated desired output $y$, measures the accuracy of the prediction $s(w,x)$ by comparing it with $y$. Training the system can be formulated as a bilevel optimization problem:
\begin{align}
\text{minimize} \; & \mathcal{J}(w) = \mathbb{E}_{(x,y)} \left[ \mathcal{C}(s(w,x),y) \right], \\ 
\text{subject to} \; & s(w,x) = \underset{s}{\arg\min} \; \mathcal{E}(w,x,s),
\end{align}
The conventional method to solve this problem is gradient descent on the upper-level objective: at each step of training, an input-output pair $(x,y)$ is picked from the training data, and the trainable weights are adjusted in proportion to the gradient of the cost function:
\begin{equation}
\Delta w = - \eta \nabla_w \mathcal{C}(s(w,x),y).
\end{equation}
The remaining task is to obtain or estimate the weight gradients $\nabla_w \mathcal{C}(s(w,x),y)$.

The main advantage of EP is that it extracts these weight gradients using the system's physics. The central idea is to view the cost function $\mathcal{C}(s,y)$ as the energy of an interaction between the state variables ($s$) and desired outputs ($y$), which can be incorporated into the energy function $\mathcal{E}$ of the system to form the `total energy function',
\begin{equation}
\label{eq:total-energy-function}
\mathcal{E}^\beta(w,x,s,y) = \mathcal{E}(w,x,s) + \beta \mathcal{C}(s,y),
\end{equation}
where $\beta \in \mathbb{R}$ is a parameter termed the `nudging parameter' that controls the strength of this new interaction. EP proceeds in three steps:
\begin{enumerate}
\item Set $\beta=0$ and let the system settle to an equilibrium state $s_\star^0$ of $\mathcal{E}^0$, called the `free state', characterized by
\begin{equation}
s_\star^0 = \underset{s}{\arg\min} \; \mathcal{E}^0(w,x,s,y) = s(w,x).
\end{equation}
For each $k \in \{ 1,2,\cdots,M \}$, measure $\frac{\partial \mathcal{E}^0}{\partial w_k}(w_1,\ldots,w_M,x,s_\star^0,y)$, i.e. the partial derivative of the energy function with respect to $w_k$.
\item Set $\beta>0$ and let the system reach a new equilibrium state $s_\star^\beta$ of $\mathcal{E}^\beta$, called the `nudge state', characterized by
\begin{equation}
\label{eq:nudge-state}
s_\star^\beta = \underset{s}{\arg\min} \; \mathcal{E}^\beta(w,x,s,y).
\end{equation}
Measure again $\frac{\partial \mathcal{E}^\beta}{\partial w_k}(w_1,\ldots,w_M,x,s_\star^\beta,y)$ for each $k \in \{ 1,2,\cdots,M \}$, at the nudge state this time.
\item Update the trainable weights $w_1,w_2,\ldots,w_M$ as
\begin{equation}
\label{eq:learning-rule-ep}
\Delta w_k = \frac{\eta}{\beta} \left[ \frac{\partial \mathcal{E}^0}{\partial w_k}(w,x,s_\star^0,y) - \frac{\partial \mathcal{E}^\beta}{\partial w_k}(w,x,s_\star^\beta,y) \right],
\end{equation}
where $\eta>0$ is a small learning rate.
\end{enumerate}

The main theoretical result of EP is that the above contrastive learning rule \eqref{eq:learning-rule-ep} approximates one step of gradient descent on the cost function.

\begin{thm}[Equilibrium Propagation]
\label{thm:equilibrium-propagation}
The gradient of the cost function with respect to the trainable weights can be approximated as
\begin{align}
\label{eq:ep-formula}
\nabla_w \mathcal{C}(s(w,x),y) & = \left. \frac{d}{d\beta} \frac{\partial \mathcal{E}^\beta}{\partial w}(w,x,s_\star^\beta,y) \right|_{\beta=0} \\
& \approx \frac{1}{\beta} \left[ \frac{\partial \mathcal{E}^\beta}{\partial w}(w,x,s_\star^\beta,y) - \frac{\partial \mathcal{E}^0}{\partial w}(w,x,s_\star^0,y) \right].
\label{eq:error-term}
\end{align}
\end{thm}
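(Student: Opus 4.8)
The plan is to introduce the \emph{equilibrium total energy} as a function of both the weights and the nudging parameter, and then to exploit the symmetry of its mixed second partial derivatives. Concretely, I would define
\begin{equation}
F(w,\beta) = \min_s \mathcal{E}^\beta(w,x,s,y) = \mathcal{E}^\beta(w,x,s_\star^\beta,y),
\end{equation}
treating the input $x$ and target $y$ as fixed. The right-hand side of \eqref{eq:ep-formula} will turn out to be one mixed second derivative of $F$, and the left-hand side the other, so that the whole identity reduces to Clairaut's theorem on the equality of mixed partials.

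First I would compute the two first-order partial derivatives of $F$ via the envelope theorem. Because $s_\star^\beta$ minimizes $\mathcal{E}^\beta$, the stationarity condition $\frac{\partial \mathcal{E}^\beta}{\partial s}(w,x,s_\star^\beta,y)=0$ holds, so the implicit dependence of $s_\star^\beta$ on $w$ and $\beta$ contributes nothing to first order. Using $\frac{\partial \mathcal{E}^\beta}{\partial \beta}=\mathcal{C}$, this yields
\begin{equation}
\frac{\partial F}{\partial \beta}(w,\beta) = \mathcal{C}(s_\star^\beta,y), \qquad \frac{\partial F}{\partial w}(w,\beta) = \frac{\partial \mathcal{E}^\beta}{\partial w}(w,x,s_\star^\beta,y).
\end{equation}

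Next I would differentiate each of these once more and equate the results. Differentiating $\frac{\partial F}{\partial \beta}$ with respect to $w$ and evaluating at $\beta=0$ gives $\nabla_w \mathcal{C}(s_\star^0,y)=\nabla_w \mathcal{C}(s(w,x),y)$, since $s_\star^0=s(w,x)$. Differentiating $\frac{\partial F}{\partial w}$ with respect to $\beta$ and evaluating at $\beta=0$ gives exactly the right-hand side of \eqref{eq:ep-formula}. Equality of the two mixed partials $\frac{\partial^2 F}{\partial w\,\partial\beta}=\frac{\partial^2 F}{\partial\beta\,\partial w}$ then delivers the identity. Finally, the approximation \eqref{eq:error-term} is simply the first-order forward-difference estimate of this $\beta$-derivative at $\beta=0$, with error $O(\beta)$ by Taylor's theorem.

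The main obstacle is rigorously justifying the envelope theorem and the interchange of derivatives rather than the algebra, which is routine. This requires that the free state $s_\star^\beta$ vary smoothly with $(w,\beta)$, which I would establish through the implicit function theorem applied to the stationarity equation, under the standing assumption that the Hessian $\frac{\partial^2 \mathcal{E}^\beta}{\partial s^2}$ is nonsingular (indeed positive definite) at the minimizer. Continuity of the resulting second mixed partials then legitimizes Clairaut's theorem. I would also restrict to $\beta$ small enough that the minimizer persists and remains nondegenerate on a neighborhood of $\beta=0$, which is precisely the regime in which the contrastive learning rule \eqref{eq:learning-rule-ep} is applied.
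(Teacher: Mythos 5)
Your proof is correct and is essentially the same argument as the paper's: the paper does not prove Theorem~\ref{thm:equilibrium-propagation} in-text but defers to \citet{scellier2017equilibrium}, whose proof likewise introduces the equilibrium total energy $F(w,\beta)=\mathcal{E}^\beta(w,x,s_\star^\beta,y)$, uses the stationarity of $s_\star^\beta$ (envelope theorem) to get $\frac{\partial F}{\partial \beta}=\mathcal{C}(s_\star^\beta,y)$ and $\frac{\partial F}{\partial w}=\frac{\partial \mathcal{E}^\beta}{\partial w}(w,x,s_\star^\beta,y)$, and concludes by equality of the mixed second partials, with the finite difference supplying the $O(\beta)$ approximation. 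Your regularity remarks (implicit function theorem with nondegenerate Hessian to justify smoothness of $\beta\mapsto s_\star^\beta$ and Clairaut's theorem) correctly fill in the hypotheses under which that interchange is valid, and they match the paper's caveat that $s_\star^\beta$ must be a smooth deformation of $s_\star^0$.
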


Theorem~\ref{thm:equilibrium-propagation} is proved in \citet{scellier2017equilibrium}. Here are several important points.

\paragraph{Local learning rule.} An important feature of the EP framework is that the contrastive learning rule of Eq.~\eqref{eq:learning-rule-ep} is local if the following condition is met. Let $w = (w_1, w_2, \ldots, w_M)$ be the set of trainable weights, and assume that the energy function is separable, $\mathcal{E} = \mathcal{E}_1 + \mathcal{E}_2 + \cdots + \mathcal{E}_M$, where each $\mathcal{E}_k$ is the energy term of an interaction parameterized by $w_k$ (and $w_k$ only). Then the energy derivatives arising in the learning rule simplify as $\frac{\partial \mathcal{E}}{\partial w_k} = \frac{\partial \mathcal{E}_k}{\partial w_k}$. If the energy term $\mathcal{E}_k$ involves only state variables spatially close to $w_k$, the learning rule for $w_k$ is local in space. Various physical systems where EP has been applied satisfy this property \citep{kendall2020training,stern2021supervised,laydevant2024training,wang2024training}. Below we illustrate this property in the (classical) Ising model and elastic network model.

\paragraph{Improved gradient estimator.} 
The error term due to the finite difference approximation in Eq.~\eqref{eq:error-term} is of order $O(\beta)$. \citet{laborieux2021scaling} introduced the symmetric finite difference gradient estimator,
\begin{equation}
\label{eq:learning-rule-cep}
\nabla_w \mathcal{C}(s(w,x),y) \approx \frac{\eta}{2 \beta} \left[ \frac{\partial \mathcal{E}^\beta}{\partial w}(w,x,s_\star^\beta,y) - \frac{\partial \mathcal{E}^{-\beta}}{\partial w}(w,x,s_\star^{-\beta},y) \right],
\end{equation}
which reduces the error term to $O(\beta^2)$ and performs much better in practice. \citet{scellier2023energy} later pointed out that the contrastive learning rules of Eq.~\eqref{eq:learning-rule-ep} and Eq.~\eqref{eq:learning-rule-cep} perform (exact) gradient descent on contrastive functions, even far from the infinitesimal nudging regime $|\beta| \ll 1$, and that these contrastive functions have useful properties. In the case of Eq.~\eqref{eq:learning-rule-ep}, the contrastive function is
\begin{equation}
\mathcal{L}_\beta(w,x,y) = \frac{1}{\beta} \left[ \mathcal{E}^\beta(w,x,s_\star^\beta,y) - \mathcal{E}^0(w,x,s_\star^0,y) \right],
\end{equation}
which approximates the true cost function when $\beta \to 0$ and satisfies
\begin{equation}
\mathcal{L}_\beta(w,x,y) \leq \mathcal{C}(s(w,x),y) \leq \mathcal{L}_{-\beta}(w,x,y), \qquad \beta > 0.
\end{equation}
Namely, $\mathcal{L}_\beta(w,x,y)$ is a lower bound of $\mathcal{C}(s(w,x),y)$ when $\beta > 0$, and an upper bound when $\beta < 0$. Hence, it was also discovered that the learning rule of Eq.~\eqref{eq:learning-rule-ep} yields significantly better results with $\beta < 0$ than with $\beta > 0$.

\paragraph{Unstable equilibrium.} 
While we have assumed here that the equilibrium states $s(w,x)$ and $s_\star^\beta$ of Eq.~\eqref{eq:equilibrium-state} and Eq.~\eqref{eq:nudge-state} are stable (i.e. minima of their respective energy functions), Theorem~\ref{thm:equilibrium-propagation} is also valid when $s(w,x)$ and $s_\star^\beta$ are critical points (or saddle points), where the stationary conditions
\begin{equation}
\label{eq:stationary-state}
\frac{\partial \mathcal{E}}{\partial s}(w,x,s(w,x)) = 0, \qquad 
\frac{\partial \mathcal{E}^\beta}{\partial s}(w,x,s_\star^\beta,y) = 0
\end{equation}
are met -- see \citet[Chapter 2]{scellier2021deep} for a brief discussion. A condition for Theorem~\ref{thm:equilibrium-propagation} to hold is that the nudge state $s_\star^\beta$ must be the stationary state obtained as a smooth deformation of $s_\star^0$ as we gradually vary the nudging parameter from $0$ to $\beta \neq 0$.

\subsection{Ising model}
\label{sec:ising-model}

The (classical) Ising model of coupled spins is a widely studied model in physics. It has been explored as a computing platform for machine learning, and recently studied in the context of EP \citep{laydevant2024training}. The model consists of $N$ classical spins, characterized by their state $\sigma_k \in \{ +1, -1 \}$, representing ``up'' or ``down'' states. The state of the system is represented by the $N$-dimensional vector of spin states, $s = (\sigma_1, \sigma_2, \ldots, \sigma_N)$, and the Ising energy function is defined as
\begin{equation}
\label{eq:ising-energy-function}
\mathcal{E}_{\rm Ising}(\sigma_1, \sigma_2, \ldots, \sigma_N) = - \sum_{1 \leq j < k \leq N} J_{jk} \sigma_j \sigma_k - \sum_{k=1}^N h_k \sigma_k,
\end{equation}
where $J_{jk}$ represents the couplings between spins, and $h_k$ represents the bias fields applied to individual spins. These parameters serve as trainable weights in the model. The partial derivatives of the energy function with respect to these trainable weights are given by:
\begin{equation}
\frac{\partial \mathcal{E}_{\rm Ising}}{\partial J_{jk}} = - \sigma_j \sigma_k, \qquad \frac{\partial \mathcal{E}_{\rm Ising}}{\partial h_k} = - \sigma_k.
\end{equation}
\citet{laydevant2024training} implemented an Ising network on the D-Wave Ising machine, and trained it to classify the MNIST handwritten digits using EP. They employed the quantum annealing procedure of D-Wave to reach the ground state. They also emulated a small convolutional Ising network, using the Chimera architecture of D-Wave's chips to implement the necessary convolutional operations.

\subsection{Elastic network}
\label{sec:elastic-network}

The elastic network model, first studied by \citet{stern2021supervised} in the context of EP, consists of $N$ masses $m_1, m_2, \ldots, m_N$  interconnected by springs. Denoting the position of mass $m_i$ as $\vec{r}_i$, the elastic energy stored in the spring connecting $m_i$ to $m_j$ is given by $\frac{1}{2} k_{ij} \left( \| \vec{r}_i - \vec{r}_j \| - \ell_{ij} \right)^2$, where $k_{ij}$ is the spring constant and $\ell_j$ is the spring's rest length. The state of the system is the vector of mass positions, $s=(\vec{r}_1, \vec{r}_2, \ldots, \vec{r}_N)$, and the total elastic energy stored in the springs of the network is
\begin{equation}
    \mathcal{E}_{\rm elastic}(\vec{r}_1, \vec{r}_2, \ldots, \vec{r}_N) = \sum_{1 \leq i,j \leq N} \frac{1}{2} k_{ij} \left( \| \vec{r}_i - \vec{r}_j \| - \ell_{ij} \right)^2.
\end{equation}
Using the spring constants and spring rest lengths as trainable weights of the system, the partial derivatives of the energy function with respect to these weights are given by
\begin{equation}
    \frac{\partial \mathcal{E}_{\rm elastic}}{\partial k_{ij}} =  \frac{1}{2} \left( \| \vec{r}_i - \vec{r}_j \| - \ell_{ij} \right)^2, \qquad \frac{\partial \mathcal{E}_{\rm elastic}}{\partial \ell_{ij}} =  k_{ij} \left( \ell_{ij} - \| \vec{r}_i - \vec{r}_j \| \right).
\end{equation}
An experimental realization of an elastic network that learns using a variant of EP was performed by \citet{altman2023experimental}. In their implementation, they used the spring rest lengths $\ell_{ij}$ as trainable weights, while keeping the spring constants fixed (untrained).

\section{Concepts of quantum mechanics}
\label{sec:quantum-mechanics}

This section presents the concepts of quantum mechanics used in Quantum Equilibrium Propagation (QEP). Readers familiar with quantum mechanics may skip this section.

\paragraph{Quantum state.}
A quantum system is described by its state vector, denoted $| \psi \rangle$, which belongs to a complex vector space $\mathcal{H}$ equipped with an inner product $\langle \cdot| \cdot \rangle$ (specifically, a Hilbert space). For simplicity, we consider the case where $\mathcal{H}$ is finite-dimensional with dimension $d$.

\paragraph{Hamiltonian, eigenstates and energy levels.}
A central element of a quantum system is its Hamiltonian $\widehat{H}$, which is a linear operator acting on the Hilbert space, $\widehat{H} : \mathcal{H} \to \mathcal{H}$, with the property of being self-adjoint (i.e. $\widehat{H}$ equals its own adjoint). Because $\widehat{H}$ is self-adjoint, its eigenvalues are real. We denote the eigenvectors of $\widehat{H}$ as $|\psi_0 \rangle$, $|\psi_1 \rangle$, ..., $|\psi_{d-1} \rangle$, and the associated eigenvalues as $E_0 \leq E_1 \leq \ldots \leq E_{d-1}$, such that:
\begin{equation}
\label{eq:time-independent-schrodinger-equation}
\widehat{H} |\psi_k \rangle = E_k |\psi_k \rangle, \qquad 0 \leq k \leq d-1.
\end{equation}
The eigenvectors $|\psi_k \rangle$ are also called the eigenstates of the Hamiltonian, and their associated eigenvalues $E_k$ are the energy levels. The eigenstate $|\psi_0 \rangle$ with the lowest energy level is called the ground state. Eq.~\eqref{eq:time-independent-schrodinger-equation} is known as the time-independent Schrödinger equation.

\paragraph{Observable, measurement and expectation value.}
An observable represents a measurable physical quantity. In contrast with classical mechanics where an observable is a real-valued function of the system's state, in quantum mechanics, the mathematical representation of an observable is a self-adjoint operator, $\widehat{O} : \mathcal{H} \to \mathcal{H}$. The set of possible outcomes of measuring $\widehat{O}$ in state $|\psi\rangle$ is the set of eigenvalues of $\widehat{O}$, denoted $o_0$, $o_1$, ..., $o_{d-1}$, which are real due to the self-adjoint property.\footnote{The Hamiltonian $\widehat{H}$ is an example of an observable, with possible measurement outcomes being the energy levels $E_0$, $E_1$, ..., $E_{d-1}$.} A peculiar aspect of quantum mechanics is that measurement outcomes are inherently probabilistic. When the system is in state $|\psi\rangle$, the probability of obtaining outcome $o_k$ upon measuring $\widehat{O}$ is given by the Born rule, $p_k = |\langle o_k|\psi\rangle|^2$, where $|o_k \rangle$ is the eigenstate associated with $o_k$, i.e. such that $\widehat{O} |o_k \rangle = o_k |o_k \rangle$. The expectation value of a measurement of $\widehat{O}$ when the system is in state $|\psi\rangle$ is denoted $\langle \widehat{O} \rangle_{\psi}$ and calculated as $\langle\widehat{O} \rangle_{\psi} = \sum_{k=1}^d p_k o_k$. Using the spectral theorem for self-adjoint operators, it can be shown that this expectation value rewrites more concisely as
\begin{equation}
\langle \widehat{O} \rangle_{\psi} = \langle \psi | \widehat{O} |\psi \rangle.
\end{equation}
In statistical terms, this expectation value represents the average result of a large number of measurements of the observable $\widehat{O}$ performed on the system in state $|\psi\rangle$.


\paragraph{State collapse.} Another peculiar aspect of quantum mechanics is that a measurement on a quantum system inevitably changes its state (in contrast with classical mechanics where measurements on a system can be performed without significantly perturbing its state). Specifically, upon measurement of the observable $\widehat{O}$, if the measurement outcome is eigenvalue $o_k$, then the state $|\psi\rangle$ of the system instantaneously ``collapses'' to the corresponding eigenstate $|o_k\rangle$. This principle is known as state collapse, or state reduction. Consequently, if the quantum system is in state $|o_k\rangle$ after measuring $\widehat{O}$, then performing a second measurement of $\widehat{O}$ immediately after the first one will necessarily yield the same outcome $o_k$ and leave the state unchanged, in accordance with the Born rule ($p_k = \langle o_k | o_k \rangle = 1$). However, state collapse also has unusual consequences that do not occur in classical mechanics. To illustrate, suppose we want to measure two observables $\widehat{O}$ and $\widehat{P}$ in state $|\psi\rangle$, aiming to obtain (unbiased) estimates of both $\langle \widehat{O} \rangle_{\psi}$ and $\langle \widehat{P} \rangle_{\psi}$. The difficulty is that the system is no longer in state $|\psi\rangle$ after measuring $\widehat{O}$. In general, to obtain an (unbiased) estimate of $\langle \widehat{P} \rangle_{\psi}$, we must first reset the system to state $|\psi\rangle$ before measuring $\widehat{P}$.

\paragraph{Commuting observables.} One notable case where it is legitimate to measure $\widehat{O}$ and $\widehat{P}$ successively without resetting the state of the system between the two measurements is when the two observables commute, i.e.
\begin{equation}
    \widehat{O} \widehat{P} = \widehat{P} \widehat{O}.
\end{equation}
In this case, the two operators $\widehat{P}$ and $\widehat{O}$ are simultaneous diagonalizable: the eigenstates $|o_1 \rangle$, ..., $|o_{d-1} \rangle$ of $\widehat{O}$ are also eigenstates of $\widehat{P}$. Therefore, the probability of collapsing to any eigenstate $|o_k\rangle$ is the same under each observable (given by the Born rule, $p_k = |\langle o_k | \psi \rangle |^2$), and subsequent measurements of either $\widehat{O}$ or $\widehat{P}$ will leave the state unchanged. This allows for successive measurements of $\widehat{O}$ and $\widehat{P}$ to obtain unbiased estimates of their expectation values in the initial state $|\psi\rangle$, without resetting the system between the measurements.\footnote{In this derivation, we have assumed for simplicity that all eigenvalues of the operators are distinct, meaning the eigenspaces are one-dimensional. Although this reasoning does not apply when some eigenspaces have higher dimensions, the result still holds.}

\paragraph{Variational formulation of the Hamiltonian's eigenstates.}
The central result that allows us to transpose EP to quantum systems is the following variational formulation of the eigenstates of the system's Hamiltonian.

\begin{restatable}[Variational formulation of the ground state]{lma}{lmavariational}
\label{lma:variational-formulation}
The ground state $|\psi_0\rangle$ achieves the minimum of the Hamiltonian's expectation value:
\begin{equation}
|\psi_0\rangle = \underset{\psi \in \mathcal{H}, \|\psi\|=1}{\arg \min} \; \langle \psi | \widehat{H} | \psi \rangle.
\end{equation}
\end{restatable}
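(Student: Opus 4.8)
The plan is to prove this by spectral decomposition, which reduces the optimization over the unit sphere in $\mathcal{H}$ to an optimization over a probability simplex. Since $\widehat{H}$ is self-adjoint on the finite-dimensional space $\mathcal{H}$, the spectral theorem supplies an orthonormal eigenbasis $|\psi_0\rangle, |\psi_1\rangle, \ldots, |\psi_{d-1}\rangle$ with real eigenvalues $E_0 \leq E_1 \leq \cdots \leq E_{d-1}$, exactly as in Eq.~\eqref{eq:time-independent-schrodinger-equation}. First I would expand an arbitrary normalized state in this basis, $|\psi\rangle = \sum_{k=0}^{d-1} c_k |\psi_k\rangle$ with $c_k = \langle \psi_k | \psi \rangle \in \mathbb{C}$, so that the constraint $\|\psi\| = 1$ becomes $\sum_{k} |c_k|^2 = 1$.

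Next I would substitute this expansion into the objective and use the eigenvalue equation $\widehat{H} |\psi_k\rangle = E_k |\psi_k\rangle$ together with orthonormality $\langle \psi_j | \psi_k \rangle = \delta_{jk}$ to collapse the double sum into
\[
\langle \psi | \widehat{H} | \psi \rangle = \sum_{j,k} \bar{c}_j c_k E_k \langle \psi_j | \psi_k \rangle = \sum_{k=0}^{d-1} |c_k|^2 E_k .
\]
This exhibits the expectation value as a convex combination of the energy levels, with nonnegative weights $|c_k|^2$ summing to one. The key inequality is then immediate: because $E_k \geq E_0$ for every $k$,
\[
\sum_{k} |c_k|^2 E_k \geq E_0 \sum_{k} |c_k|^2 = E_0 ,
\]
with equality precisely when all weight lies on eigenstates of eigenvalue $E_0$. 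Since $|\psi\rangle = |\psi_0\rangle$ attains this bound, $\langle \psi_0 | \widehat{H} | \psi_0 \rangle = E_0$, this shows simultaneously that $E_0$ is the minimum value of the objective and that $|\psi_0\rangle$ is a minimizer.

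The point requiring care is the uniqueness implicit in the $\arg\min$ notation, which is really the only obstacle here. The minimizer is unique only when the ground energy is non-degenerate, $E_0 < E_1$; if the lowest eigenspace has dimension greater than one, every unit vector in that eigenspace attains $E_0$, so $\arg\min$ is a set and the statement should be read as asserting $|\psi_0\rangle$ belongs to it. I would flag this caveat explicitly. A complementary route is the Lagrange-multiplier argument: forming $\langle \psi | \widehat{H} | \psi \rangle - \lambda ( \langle \psi | \psi \rangle - 1 )$ and setting its variation with respect to $\langle \psi |$ to zero recovers $\widehat{H} |\psi\rangle = \lambda |\psi\rangle$, so the stationary points are exactly the eigenstates. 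This variant does not by itself single out the minimum, but it has the advantage of generalizing to the full characterization of all eigenstates as critical points of the mean energy functional, which is what QEP ultimately relies on; for the ground-state minimum alone, however, the spectral argument above is the most direct.
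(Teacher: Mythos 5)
Your proof is correct and follows essentially the same route as the paper's: expand $|\psi\rangle$ in the orthonormal eigenbasis supplied by the spectral theorem, rewrite $\langle \psi | \widehat{H} | \psi \rangle$ as the convex combination $\sum_k |c_k|^2 E_k$, and bound it below by $E_0$ with the same equality condition. Your added remarks on degeneracy of the ground space and the Lagrange-multiplier characterization of eigenstates as critical points are sound refinements but do not change the argument, which matches the paper's proof step for step.
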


More generally, the eigenstates of the Hamiltonian $\widehat{H}$ are the critical points of the Rayleigh quotient $\psi \mapsto \frac{\langle \psi | \widehat{H} | \psi \rangle}{\langle \psi | \psi \rangle}$.

\section{Quantum equilibrium propagation}
\label{quantum-equilibrium-propagation}

Now we turn to Quantum Equilibrium Propagation (QEP), a version of EP for quantum systems. In this context, where we use the quantum system as a `learning machine', the Hamiltonian is parameterized by trainable weights $w=(w_1,w_2,\ldots,w_M)$, and an input $x$ can be supplied as a boundary condition to the system. We denote the corresponding Hamiltonian as $\widehat{H}(w,x)$. Next, we assume that the system reaches its ground state $|\psi(w,x) \rangle$, characterized by
\begin{equation}
\widehat{H} (w,x) |\psi(w,x) \rangle = E(w,x) |\psi(w,x) \rangle,
\end{equation}
where $E(w,x)$ is the ground state energy level, i.e. the lowest eigenvalue of $\widehat{H}(w,x)$. Similar to the classical setting, the ground state $|\psi(w,x) \rangle$ is used to encode a prediction on the desired output $y$, based on the supplied input $x$. We also assume an observable $\widehat{C}(y)$ parameterized by $y$, whose expectation value at the Hamiltonian's ground state represents the cost function to minimize,
\begin{equation}
\langle \widehat{C}(y) \rangle_{\psi(w,x)} = \langle \psi(w,x) | \widehat{C}(y) | \psi(w,x) \rangle.
\end{equation}
The goal is to adjust the trainable weights of the Hamiltonian to minimize this cost function. Similar to classical EP, we assume that $\widehat{C}(y)$ is the Hamiltonian of an interaction between the system's state $|\psi\rangle$ and desired output $y$, and that this interaction can be integrated into the system. Specifically, we form the `total Hamiltonian'
\begin{equation}
\widehat{H}^\beta = \widehat{H}(w,x) + \beta \widehat{C}(y),
\end{equation}
where $\beta \in \mathbb{R}$ controls the strength of the new interaction. Given an input-output pair $(x,y)$, QEP proceeds as follows.
\begin{enumerate}
\item Set $\beta=0$. For each $k \in \{ 1,2,\cdots,M \}$, repeat the following step $T$ times: reach the ground state $|\psi_\star^0\rangle$ of $\widehat{H}^0$, characterized by
\begin{equation}
\widehat{H}^0 |\psi_\star^0\rangle = E_\star^0 |\psi_\star^0\rangle,
\end{equation}
where $E_\star^0$ is the associated ground state energy, and measure the observable $\frac{\partial \widehat{H}^0}{\partial w_k}$. Denote the outcomes of the $T$ measurements $h_k^{(1)}(0), h_k^{(2)}(0), \ldots, h_k^{(T)}(0)$.
\item Set $\beta>0$ and proceed as above. For each $k \in \{ 1,2,\cdots,M \}$, repeat the following step $T$ times: reach the ground state $|\psi_\star^\beta\rangle$ of $\widehat{H}^\beta$, characterized by
\begin{equation}
\widehat{H}^\beta |\psi_\star^\beta\rangle = E_\star^\beta |\psi_\star^\beta\rangle,
\end{equation}
where $E_\star^\beta$ is the associated ground state energy, and measure the observable $\frac{\partial \widehat{H}^\beta}{\partial w_k}$. Denote the outcomes of the $T$ measurements $h_k^{(1)}(\beta), h_k^{(2)}(\beta), \ldots h_k^{(T)}(\beta)$.
\item Update the trainable weights $w_1, w_2, \ldots, w_M$ as
\begin{equation}
\label{eq:learning-rule-qep}
\Delta w_k = \frac{\eta}{\beta} \left[ \frac{1}{T} \sum_{t=1}^T h_k^{(t)}(0) - \frac{1}{T} \sum_{t=1}^T h_k^{(t)}(\beta) \right].
\end{equation}
\end{enumerate}

Similar to the classical case, the learning rule of QEP (Eq.~\eqref{eq:learning-rule-qep}) approximates one step of gradient descent on the expectation value of the cost observable.

\begin{restatable}[Quantum Equilibrium Propagation]{thm}{thmqep}
\label{thm:quantum-equilibrium-propagation}
The gradient of the cost function can be approximated as
\begin{align}
\label{eq:qep}
\nabla_w \langle \psi(w,x) | \widehat{C}(y) | \psi(w,x) \rangle & = \left. \frac{d}{d\beta} \langle \psi_\star^\beta | \frac{\partial \widehat{H}^\beta}{\partial w} | \psi_\star^\beta \rangle \right|_{\beta=0} \\
\label{eq:qep-2}
& \approx \frac{1}{\beta} \left[ \langle \psi_\star^\beta | \frac{\partial \widehat{H}^\beta}{\partial w} | \psi_\star^\beta \rangle - \langle \psi_\star^0 | \frac{\partial \widehat{H}^0}{\partial w} | \psi_\star^0 \rangle \right] \\
& \approx \frac{1}{\beta} \left[ \frac{1}{T} \sum_{t=1}^T h_k^{(t)}(\beta) - \frac{1}{T} \sum_{t=1}^T h_k^{(t)}(0) \right].
\end{align}
\end{restatable}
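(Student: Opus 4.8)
The plan is to replicate the classical Equilibrium Propagation proof, with the \emph{Hellmann--Feynman theorem} playing the role that the envelope theorem plays classically. The key object is the ground-state energy of the total Hamiltonian viewed as a function of two variables, $E_\star^\beta(w) = \langle \psi_\star^\beta | \widehat{H}^\beta | \psi_\star^\beta \rangle$, where $\widehat{H}^\beta = \widehat{H}(w,x) + \beta \widehat{C}(y)$. By Lemma~\ref{lma:variational-formulation}, this equals $\min_{\|\psi\|=1} \langle \psi | \widehat{H}^\beta | \psi\rangle$. The strategy is to compute the two first partial derivatives of $E_\star^\beta$ (with respect to $\beta$ and with respect to each $w_k$), then equate the mixed second derivatives.

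First I would establish the Hellmann--Feynman identity: for a normalized eigenstate $|\psi_\star\rangle$ of a parameter-dependent Hamiltonian with eigenvalue $E_\star$, one has $\partial_\lambda E_\star = \langle \psi_\star | \partial_\lambda \widehat{H} | \psi_\star\rangle$. This follows by expanding $\partial_\lambda \langle \psi | \widehat{H} | \psi\rangle$ and using the eigenvalue equation together with self-adjointness to collapse the two terms containing $\partial_\lambda |\psi\rangle$ into $E_\star\, \partial_\lambda \langle \psi|\psi\rangle = 0$, since the norm is held fixed at one. Applying this with $\lambda = \beta$ (where $\partial_\beta \widehat{H}^\beta = \widehat{C}(y)$) gives $\partial_\beta E_\star^\beta = \langle \psi_\star^\beta | \widehat{C}(y) | \psi_\star^\beta\rangle$, and with $\lambda = w_k$ gives $\partial_{w_k} E_\star^\beta = \langle \psi_\star^\beta | \partial_{w_k}\widehat{H}^\beta | \psi_\star^\beta\rangle$.

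Next I would invoke Schwarz's theorem to equate the mixed partials, $\partial_{w_k}\partial_\beta E_\star^\beta = \partial_\beta \partial_{w_k} E_\star^\beta$. The left-hand side is $\partial_{w_k}\langle \psi_\star^\beta | \widehat{C}(y) | \psi_\star^\beta\rangle$ and the right-hand side is $\frac{d}{d\beta}\langle \psi_\star^\beta | \partial_{w_k}\widehat{H}^\beta | \psi_\star^\beta\rangle$. Evaluating at $\beta = 0$, where $\widehat{H}^0 = \widehat{H}(w,x)$ forces $|\psi_\star^0\rangle = |\psi(w,x)\rangle$, the left side becomes exactly $\nabla_{w_k}\langle \psi(w,x) | \widehat{C}(y) | \psi(w,x)\rangle$, yielding the identity~\eqref{eq:qep}. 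The approximation~\eqref{eq:qep-2} is then the forward finite-difference quotient of $\beta \mapsto \langle \psi_\star^\beta | \partial_w \widehat{H}^\beta | \psi_\star^\beta\rangle$, with error $O(\beta)$ by Taylor expansion. The final line follows because $\partial_{w_k}\widehat{H}^\beta$ is itself self-adjoint, hence a genuine observable: each of the $T$ measurement outcomes $h_k^{(t)}(\beta)$ is an unbiased sample whose mean is $\langle \psi_\star^\beta | \partial_{w_k}\widehat{H}^\beta | \psi_\star^\beta\rangle$, so the empirical average converges by the law of large numbers.

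I expect the main obstacle to be justifying the regularity that both Hellmann--Feynman and the Schwarz swap silently assume: that $|\psi_\star^\beta\rangle$ and $E_\star^\beta$ are smooth (at least $C^2$) functions of $(w,\beta)$. By analytic perturbation theory this holds provided the ground state remains non-degenerate along the deformation path from $\beta=0$ to the target $\beta$ -- the quantum counterpart of the classical requirement, noted above, that the nudge state be a smooth deformation of the free state. The delicate case is a level crossing where the ground state becomes degenerate; there the derivative of $|\psi_\star^\beta\rangle$ may fail to exist, and the argument would need to be restricted to the branch obtained by continuously tracking $|\psi_\star^0\rangle$.
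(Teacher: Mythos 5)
Your proof is correct, but it takes a genuinely different route from the paper's. The paper does not re-derive anything: it defines the mean energy functional $\mathcal{E}_{\rm mean}(w,x,\psi) = \langle\psi|\widehat{H}(w,x)|\psi\rangle$ and mean cost functional $\mathcal{C}_{\rm mean}(\psi,y)=\langle\psi|\widehat{C}(y)|\psi\rangle$, observes that the total mean energy is $\langle\psi|\widehat{H}^\beta|\psi\rangle$, invokes Lemma~\ref{lma:variational-formulation} to identify the ground state $|\psi_\star^\beta\rangle$ with the constrained minimizer of this functional, and then cites the classical Theorem~\ref{thm:equilibrium-propagation} as a black box to obtain Eq.~\eqref{eq:qep}. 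You instead unwind that black box and re-prove it in the quantum setting: your Hellmann--Feynman identity $\partial_\lambda E_\star = \langle\psi_\star|\partial_\lambda\widehat{H}|\psi_\star\rangle$ is precisely the quantum counterpart of the envelope-theorem step inside the classical EP proof, and the Schwarz swap of the mixed partials $\partial_{w_k}\partial_\beta E_\star^\beta = \partial_\beta\partial_{w_k}E_\star^\beta$ mirrors the cross-derivative argument of \citet{scellier2017equilibrium}. So the two proofs share the same mathematical core but differ in packaging, and each buys something. The paper's reduction is shorter and makes the conceptual point of the whole paper transparent (QEP \emph{is} EP with energy function equal to the mean energy functional), but it quietly glosses over the fact that Theorem~\ref{thm:equilibrium-propagation} is stated for unconstrained minimization while the quantum minimization is over the unit sphere $\|\psi\|=1$; strictly one needs the critical-point (Rayleigh-quotient) version of EP alluded to in the ``unstable equilibrium'' remark. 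Your direct computation handles this cleanly by construction, since the norm constraint is used explicitly to kill the $\partial_\lambda|\psi_\star\rangle$ terms. Your argument also uses only the eigenvalue equation and self-adjointness, never minimality, so it extends verbatim to any smoothly tracked eigenstate, which the paper only asserts in a remark; you make explicit the $C^2$/non-degeneracy regularity (analytic perturbation theory, adiabatic branch tracking) that the paper leaves implicit; and you additionally justify the third, measurement-averaged line of the theorem (unbiasedness of the outcomes $h_k^{(t)}(\beta)$ under the Born rule plus the law of large numbers), which the paper's proof omits entirely, proving only Eq.~\eqref{eq:qep}. One cosmetic caveat: $|\psi_\star^0\rangle = |\psi(w,x)\rangle$ only up to a global phase, and smoothness of $\beta\mapsto|\psi_\star^\beta\rangle$ presumes a smooth gauge choice, but since every quantity in Eqs.~\eqref{eq:qep}--\eqref{eq:qep-2} is an expectation value, phase-invariance makes this harmless.
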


Theorem~\ref{thm:quantum-equilibrium-propagation} is proved in Appendix~\ref{sec:proof}. Here are a few important remarks.


\paragraph{Quantum measurements.}
Measurements in QEP differ from measurements in classical EP in several ways. First, since a measurement in quantum mechanics only gives an unbiased estimate of the expectation value, multiple measurements of the Hamiltonian derivatives $\frac{\partial \widehat{H}}{\partial w_k}$ are required to get better estimates of these expectation values. Second, since the state of the system generally changes upon measurement of an observable (Hamiltonian derivative), the system must be reset to its ground state after each measurement. Third, the observables $\frac{\partial \widehat{H}}{\partial w_j}$ and $\frac{\partial \widehat{H}}{\partial w_k}$ cannot generally be measured simultaneously unless they commute.

\paragraph{Gradient approximation.}
QEP involves two levels of approximation in the estimate of the gradient of the cost function. The first level is due to the finite difference used to approximate the derivative $\frac{d}{d\beta}$ at $\beta=0$ (also present in classical EP). The second is due to the probabilistic nature of quantum measurements. However, the insights of \citet{scellier2023energy} also apply in QEP: the learning rule of Eq.~\eqref{eq:learning-rule-qep} provides an unbiased estimator of the gradient of the contrastive function
\begin{equation}
\frac{1}{\beta} \left[ \langle \psi_\star^\beta | \widehat{H}^\beta | \psi_\star^\beta \rangle - \langle \psi_\star^0 | \widehat{H}^0 | \psi_\star^0 \rangle \right].
\end{equation}
As in the classical setting, this contrastive function is a lower bound of the true cost function if $\beta>0$ and an upper bound if $\beta<0$. The variant of \citet{laborieux2021scaling}, which combines $\beta<0$ and $\beta>0$, can also be used in QEP to improve the gradient estimator of the cost function. 

\paragraph{Using any eigenstate of the Hamiltonian.}
Similar to the classical setting where the equilibrium state need not be stable but only a critical point (stationary state) of the energy function, QEP applies to any eigenstate of the system's Hamiltonian, not just the ground state. In principle, however, one requirement for Eq.~\eqref{eq:qep-2} to hold is that the nudge eigenstate $|\psi_\star^\beta\rangle$ must be obtained as a smooth deformation (adiabatic transformation) of the free eigenstate $|\psi_\star^0\rangle$ when varying the nudging parameter from $0$ to $\beta \neq 0$.

\paragraph{Local learning rule.}
Similar to the classical setting, if the total Hamiltonian of the system can be expressed as the sum of Hamiltonians corresponding to individual interactions or contributions, i.e. $\widehat{H} = \widehat{H}_1 + \widehat{H}_2 + \cdots + \widehat{H}_M$ where $\widehat{H}_k$ is the Hamiltonian of an interaction parameterized by $w_k$, then the Hamiltonian derivatives simplify to $\frac{\partial \widehat{H}}{\partial w_k} = \frac{\partial \widehat{H}_k}{\partial w_k}$. If the trainable weight $w_k$ is stored close to where the observable $\frac{\partial \widehat{H}_k}{\partial w_k}$ is measured, the learning rule for $w_k$ becomes local.
Next we present for illustration the setting of the transverse-field Ising model and quantum harmonic oscillator network.

\subsection{Transverse-field Ising model}
\label{sec:quantum-ising-model}

As a first example of a quantum system that may be trained using QEP, we consider the transverse-field Ising model, a quantum version of the classical Ising model described in Section~\ref{sec:ising-model}. Recall that a classical Ising network of $N$ classical spins has $2^N$ possible configurations: each of the $N$ spins can be either `up' or `down'. In the quantum setting, a system of $N$ spins exists in a superposition of these $2^N$ configurations (i.e. a linear combination with complex coefficients). Hence the critical difference between the classical and the quantum settings: while the state of the classical model is described by a $N$-dimensional binary-valued vector, the state of the quantum model is described by a $2^N$-dimensional complex-valued vector. We denote the $d=2^N$ basis states as $| \sigma_1 \sigma_2 \cdots \sigma_N \rangle$ with $\sigma_k \in \{ \uparrow, \downarrow \}$ for each $k \in \{ 1,2,\ldots,N \}$, e.g. $| \uparrow \uparrow \cdots \uparrow \uparrow \rangle$, $| \uparrow \uparrow \cdots \uparrow \downarrow \rangle$ and similarly for the other $2^N-2$ basis states.

Similar to the Ising energy function~\eqref{eq:ising-energy-function}, the Hamiltonian of the transverse-field Ising model has couplings between spins $J_{jk} \in \mathbb{R}$ and bias fields $h_k \in \mathbb{R}$ applied to individual spins. These are the trainable weights. The Ising Hamitonian takes the form
\begin{equation}
\widehat{H}_{\rm Ising} = - \sum_{1 \leq j < k \leq N} J_{jk} \widehat{Z}_j \widehat{Z}_k - \sum_{k=1}^N h_k \widehat{X}_k,
\end{equation}
where $\widehat{Z}_k$ and $\widehat{X}_k$ are the Pauli operators, defined as follows. The Pauli $\widehat{Z}_k$ operator acts as a phase-flip operator on the $k$-th spin, according to:
\begin{align}
\widehat{Z}_k \; | \sigma_1 \cdots \sigma_{k-1} \uparrow \sigma_{k+1} \cdots \sigma_N \rangle & = + | \sigma_1 \cdots \sigma_{k-1} \uparrow \sigma_{k+1} \cdots \sigma_N \rangle, \\
\widehat{Z}_k \; | \sigma_1 \cdots \sigma_{k-1} \downarrow \sigma_{k+1} \cdots \sigma_N \rangle & = - | \sigma_1 \cdots \sigma_{k-1} \downarrow \sigma_{k+1} \cdots \sigma_N \rangle.
\end{align}
The Pauli $\widehat{X}_k$ operator acts as a bit-flip operator on the $k$-th spin, according to:
\begin{align}
\widehat{X}_k \; | \sigma_1 \cdots \sigma_{k-1} \uparrow \sigma_{k+1} \cdots \sigma_N \rangle & = | \sigma_1 \cdots \sigma_{k-1} \downarrow \sigma_{k+1} \cdots \sigma_N \rangle, \\
\widehat{X}_k \; | \sigma_1 \cdots \sigma_{k-1} \downarrow \sigma_{k+1} \cdots \sigma_N \rangle & = | \sigma_1 \cdots \sigma_{k-1} \uparrow \sigma_{k+1} \cdots \sigma_N \rangle.
\end{align}
In this setting, the gradients of the Ising Hamiltonian with respect to the trainable weights, required in the learning rule of Eq.~\eqref{eq:learning-rule-qep}, are given by
\begin{equation}
\frac{\partial \widehat{H}_{\rm Ising}}{\partial J_{jk}} = - \widehat{Z}_j \widehat{Z}_k, \qquad \frac{\partial \widehat{H}_{\rm Ising}}{\partial h_k} = - \widehat{X}_k.
\end{equation}
Importantly, the Pauli $\widehat{Z}_k$ operators ($1 \leq k \leq N$) commute with one another, allowing them to be measures simultaneously (when the system is in the ground state). Similarly, the Pauli $\widehat{X}_k$ operators ($1 \leq k \leq N$) also commute and can be measured simultaneously. However, $\widehat{Z}_j$ and $\widehat{X}_k$ do not commute, so they cannot be measured simultaneously.

\subsection{Quantum harmonic oscillator network}

The quantum harmonic oscillator network is a quantum analogue of the elastic network model presented in Section~\ref{sec:elastic-network}. It consists of $N$ quantum particles, such as atoms, interacting via harmonic potentials. Let's assume for simplicity that each atom is described by its 1D position (rather than 3D position). In classical mechanics, the state of such a system would be described by the $N$-dimensional vector of positions of the atoms $(r_1,r_2,\cdots,r_N) \in \mathbb{R}^N$. In contrast, in quantum mechanics, the state of the system is a superposition of all these configurations, described by a wave function $\psi : \mathbb{R}^N \to \mathbb{C}$ that assigns a complex number $\psi(r_1, r_2, \cdots, r_N)$ to each possible configuration $(r_1, r_2, \cdots, r_N)$.

The position and momentum operators of the $i$-th atom, denoted as $\widehat{r}_i$ and $\widehat{p}_i$, are defined by their action on the wavefunction as follows:
\begin{align}
    (\widehat{r}_i \psi) (r_1,r_2,\cdots,r_N) & = r_i \psi (r_1,r_2,\cdots,r_N), \\
    (\widehat{p}_i \psi) (r_1,r_2,\cdots,r_N) & = -\textbf{i} \hbar \frac{\partial \psi}{\partial r_i} (r_1,r_2,\cdots,r_N),
\end{align}
where $\textbf{i}$ is the imaginary unit ($\textbf{i}^2 = -1$) and $\hbar$ is the reduced Planck constant. Similar to the elastic energy function, the Hamiltonian of the quantum harmonic oscillator network is given by:
\begin{equation}
    \widehat{H}_{\rm QHO} = \sum_{i=1}^N \frac{\widehat{p}_i^2}{2 m_i} + \frac{1}{2} \sum_{1 \leq i,j \leq N} k_{ij} \left( \widehat{r}_i - \widehat{r}_j \right)^2,
\end{equation}
where $\frac{\widehat{p}_i^2}{2 m_i}$ is the kinetic energy operator of the $i$-th atom, and $\frac{1}{2} k_{ij} \left( \widehat{r}_i - \widehat{r}_j \right)^2$ is the harmonic potential operator between the $i$-th and $j$-th atoms. In these expressions, $m_i$ are the masses of the atoms and $k_{ij}$ are the spring constants, which serve as trainable weights. The partial derivatives of the Hamiltonian with respect to the spring constants are given by
\begin{equation}
    \frac{\partial \widehat{H}_{\rm QHO}}{\partial k_{ij}} =  \frac{1}{2} \left( \widehat{r}_i - \widehat{r}_j \right)^2.
\end{equation}
It is straightforward to verify that all $\widehat{r}_i$ operators commute with each other. This allows us to measure these observables simultaneously (to obtain the gradients of the cost function).

\section{Discussion}

Equilibrium Propagation (EP) has been studied in various classical physical systems, and has been implemented experimentally in resistor networks, classical Ising networks and elastic networks. EP is generally applicable to systems that extremize an energy functional. Quantum Equilibrium Propagation (QEP) extends EP to quantum systems, where the extremized functional is the mean energy, achieving its extrema at the eigenstates of the system's Hamiltonian. QEP could serve both as a normative framework for training quantum systems, and to justify the use of EP in physical systems where quantum effects may arise and where the classical EP framework is not directly applicable. For instance, QEP could provide insights into the work by \citet{laydevant2024training}, which employed quantum annealing in the context of classical EP.

One attractive feature of QEP, similar to EP, is the locality of the learning rule. This locality might be advantageous for building specialized quantum computers with reduced classical computation overhead, where the physical quantities serving as trainable weights would be located near the location where the Hamiltonian derivatives are measured. A complication in the quantum setting is the probabilistic nature of quantum measurements. Multiple measurements are often required to obtain accurate gradient estimates of the cost function, which could necessitate additional memory to store the outcomes. To address this, studying the effect of single measurements of single eigenstates (free or nudge) on the variance of the gradient estimator would be useful, similar to the study by \citet{williams2023flexible} in the classical setting. Another difference with the classical setting is that Hamiltonian derivatives generally cannot be measured simultaneously, unless they commute. We discussed the transverse-field Ising model and the quantum harmonic oscillator network (quantum analogues of the classical Ising network and elastic network) and we have seen that measurements in these models can largely be parallelized.

Similar to the classical setting where EP can be applied to any stationary state (critical point) of the system's energy function, QEP can be applied to any eigenstate of the system's Hamiltonian, not just the ground state. While this feature provides greater flexibility, a caveat is that for Eq.~\eqref{eq:qep-2} to hold, the nudge eigenstate must in principle be obtained as a smooth deformation (adiabatic transformation) of the free eigenstate. It remains to be seen whether this condition is necessary or can be further relaxed in practice.

The next step would be to simulate QEP numerically. For quantum Ising networks, the exact diagonalization method becomes impractical when the number of spins $N$ exceeds a few dozen due to the exponential growth of the state space ($d = 2^N$). For larger systems, approximate methods such as the Density Matrix Renormalization Group (DMRG) and Variational Monte Carlo (VMC) could be employed.

Finally, two recent theoretical advances in EP might also be useful in the quantum setting to expand QEP. The first is  `Holomorphic EP' (HEP), introduced by \citet{laborieux2022holomorphic}. HEP is a version of EP that applies to systems described by complex-valued state variables governed by a holomorphic energy function. HEP's advantage is that it allows for extracting the exact weight gradient of the cost function, rather than an approximation of it. To achieve this, HEP uses a complex nudge $\beta \in \mathbb{C}$ and applies the Cauchy formula of complex analysis to Eq.~\eqref{eq:ep-formula}, rather than the finite difference method. Applying HEP to quantum systems would be interesting, but one difficulty is that the total Hamiltonian $\widehat{H} + \beta \widehat{C}$ is not self-adjoint when $\beta \in \mathbb{C} \setminus \mathbb{R}$, and is therefore incompatible with standard formulations of quantum mechanics. The second theoretical advance is `Agnostic EP' (AEP), introduced in \citet{scellier2022agnostic}. In AEP, weight updates are performed through physical dynamics, without relying on measurements and feedback. AEP thus overcomes several challenges related to implementing the contrastive learning rules of Eq.~\ref{eq:learning-rule-ep} and Eq.~\ref{eq:learning-rule-qep} (discussed above). To achieve this, in AEP, trainable weights are seen as variables that minimize the system's energy function, while `control variables' are coupled with the trainable weights and can perform homeostatic control on them. The term `Agnostic' reflects the minimal analytical knowledge required about the system, unlike the contrastive learning rule that necessitates knowledge of the partial derivatives of the energy function with respect to the trainable weights.

\begin{ack}
The author thanks Yassir Akram, Nicolas Zucchet and Jo\~ao Sacramento for useful discussions about quantum mechanics and EP, as well as Suhas Kumar for useful feedback on the manuscript.
\end{ack}

\bibliographystyle{abbrvnat}
\bibliography{biblio}

\clearpage
\appendix

\newpage
\section{Proof of Theorem~\ref{thm:quantum-equilibrium-propagation}}
\label{sec:proof}

In this appendix, we prove Theorem~\ref{thm:quantum-equilibrium-propagation} using Lemma~\ref{lma:variational-formulation}. First we repeat and prove Lemma~\ref{lma:variational-formulation}.

\lmavariational*

\begin{proof}[Proof of Lemma~\ref{lma:variational-formulation}]
Let $|\psi_0 \rangle$, $|\psi_1 \rangle$, ..., $|\psi_{d-1} \rangle$ the eigenvectors of $\widehat{H}$, and $E_0 \leq E_1 \leq \ldots \leq E_{d-1}$ the associated eigenvalues, such that:
\begin{equation}
\widehat{H} |\psi_k \rangle = E_k |\psi_k \rangle, \qquad 0 \leq k \leq d-1.
\end{equation}
Due to the self-adjoint property of $\widehat{H}$, by the spectral theorem, the eigenstates of $\widehat{H}$ form an orthonormal basis of $\mathcal{H}$,
\begin{equation}
\langle \psi_j |\psi_k \rangle = \left\{
\begin{array}{l}
    1 \qquad \text{if } j=k, \\
   0 \qquad \text{if } j \neq k,
\end{array}
\right. \qquad 0 \leq j,k \leq d-1.
\end{equation}
In this orthonormal basis, the following decomposition holds for any state $|\psi\rangle$,
\begin{equation}
\forall |\psi\rangle \in \mathcal{H}, \qquad |\psi \rangle = \sum_{k=0}^{d-1} | \psi_k \rangle \langle \psi_k | \psi \rangle.
\end{equation}
It follows that
\begin{equation}
\langle \psi | \widehat{H} | \psi \rangle = \sum_{k=1}^d |\langle \psi_k | \psi \rangle|^2 E_k \qquad \text{and} \qquad \langle \psi | \psi \rangle = \sum_{k=0}^{d-1} |\langle \psi_k | \psi \rangle|^2,
\end{equation}
so that
\begin{equation}
\langle\psi|\widehat{H}|\psi\rangle = \sum_{k=0}^{d-1} |\langle \psi_k | \psi \rangle|^2 E_k \geq \sum_{k=0}^{d-1} |\langle \psi_k | \psi \rangle|^2 E_0 = E_0,
\end{equation}
where we have used that $|\langle \psi_k | \psi \rangle|^2 \geq 0$, $E_k \geq E_0$ for every $k$, and $\|\psi\|=1$. Equality holds when $|\langle \psi_k | \psi \rangle|^2 = 0$ for $k \geq 1$, i.e. for $| \psi \rangle = \alpha | \psi_0 \rangle$ with $\alpha \in \mathbb{C}$ and $|\alpha|=1$.
\end{proof}

Next we repeat and prove Theorem~\ref{thm:quantum-equilibrium-propagation}.

\thmqep*

\begin{proof}[Proof of Theorem~\ref{thm:quantum-equilibrium-propagation}]
Define the mean energy functional and mean cost functional
\begin{equation}
\forall \psi \in \mathcal{H}, \qquad \mathcal{E}_{\rm mean}(w,x,\psi) = \langle\psi|\widehat{H}(w,x)|\psi\rangle, \qquad \mathcal{C}_{\rm mean}(\psi,y) = \langle\psi|\widehat{C}(y)|\psi\rangle.
\end{equation}
The mean total energy functional associated to $\mathcal{E}_{\rm mean}$ and $\mathcal{C}_{\rm mean}$, as defined by Eq.~\eqref{eq:total-energy-function}, is
\begin{align}
\mathcal{E}_{\rm mean}^\beta(w,x,\psi,y) & = \mathcal{E}_{\rm mean}(w,x,\psi) + \beta \mathcal{C}_{\rm mean}(\psi,y) \\
& = \langle \psi | \widehat{H}(w,x) | \psi \rangle + \beta \langle \psi | \widehat{C}(y) | \psi \rangle \\
& = \langle \psi | \left[ \widehat{H}(w,x) + \beta \widehat{C}(y) \right] | \psi \rangle = \langle \psi | \widehat{H}^\beta | \psi \rangle.
\end{align}
By Lemma~\ref{lma:variational-formulation}, we have
\begin{equation}
\psi_\star^\beta = \underset{\psi \in \mathcal{H}, \| \psi \|=1}{\arg \min} \; \mathcal{E}_{\rm mean}^\beta(w,x,\psi,y) \qquad \Longleftrightarrow \qquad \widehat{H}^\beta | \psi_\star^\beta \rangle = E_\star^\beta | \psi_\star^\beta \rangle,
\end{equation}
which allows us to apply Theorem~\ref{thm:equilibrium-propagation} in the QEP algorithm:
\begin{equation}
\nabla_w \mathcal{C}_{\rm mean}(\psi(w,x),y) = \left. \frac{d}{d\beta} \frac{\partial \mathcal{E}_{\rm mean}^\beta}{\partial w}(w,x,\psi_\star^\beta,y) \right|_{\beta=0}.
\end{equation}
Using the explicit forms of the `mean cost functional' and `mean total energy functional', we obtain
\begin{equation}
\nabla_w \langle \psi(w,x) | \widehat{C}(y) | \psi(w,x) \rangle = \left. \frac{d}{d\beta} \langle \psi_\star^\beta | \frac{\partial \widehat{H}^\beta}{\partial w} | \psi_\star^\beta \rangle \right|_{\beta=0},
\end{equation}
which is Eq.~\eqref{eq:qep}.
\end{proof}

\end{document}